\newcommand{\C}{\mathbb C}
\def\V{{\mathcal{V}}}
\def\A{{\mathcal{A}}}
\def\d{\partial}
\def\nd{\mathsf{n.d.}}
\def\H{\mathcal{H}}
\def\CP1{\mathbb{C}\mathrm{P}^1}
\newtheorem{theorem}{Theorem}[section]
\newtheorem{proposition}[theorem]{Proposition}
\theoremstyle{definition}
\newtheorem{example}[theorem]{Example}
\newtheorem{remark}[theorem]{Remark}
\title[On deformations of Hurwitz Frobenius manifolds]
{A remark on deformations of Hurwitz Frobenius manifolds}
\thanks{A.~B is partially supported by the grants RFBR-07-01-00593, NSh-709.2008.1. Both A.~B. and S.~S. are partly supported by the Vidi grant of NWO}
\author{A.~Buryak}
\address{A.~Buryak:\newline
Department of Mathematics,
University of Amsterdam, \newline
P.~O.~Box 94248, 1090 GE Amsterdam, 
The Netherlands\newline 
\indent and\newline
Department of Mathematics, Moscow State University,\newline
Leninskie gory, 19992 GSP-2 Moscow, Russia}
\email{a.y.buryak@uva.nl, buryaksh@mail.ru}
\author{S.~Shadrin}
\address{S.~Shadrin:\newline
Department of Mathematics,
University of Amsterdam, \newline
P.~O.~Box 94248, 1090 GE Amsterdam, 
The Netherlands\newline 
\indent and\newline
Department of Mathematics, Institute of System Research,\newline
Nakhimovsky prospekt 36-1, Moscow 117218, Russia}
\email{s.shadrin@uva.nl, shadrin@mccme.ru}
\begin{document}

\begin{abstract}
In this note we use the formalism of multi-KP hierarchies in order to give some general formulas for infinitesimal deformations of solutions of the Darboux-Egoroff system. As an application, we explain how Shramchenko's deformations of Frobenius manifold structures on Hurwitz spaces fit into the general formalism of Givental-van de Leur twisted loop group action on the space of semi-simple Frobenius manifolds.
\end{abstract}

\maketitle

\tableofcontents

\section{Introduction}

In~\cite{Dub}, Dubrovin has associated a structure of Frobenius manifolds to an arbitrary Hurwitz space of meromorphic functions on Riemann surfaces of genus $g$ with simple finite critical values and a prescribed ramification indices over infinity. 
Shramchenko observed~\cite{Shr2} that the structure of Frobenius manifold associated to a Hurwitz space can be included into a family of Frobenius manifold structures parametrized by a symmetric $g\times g$ matrix. There is a beautiful description of this deformation in terms of the values of holomorphic differentials at the critical points and their $B$-periods matrix.

Meanwhile, Givental in~\cite{Giv1,Giv2} and, independently, van de Leur in~\cite{Leu}  have constructed an action of the twisted loop group of $GL_n$ on the space of semi-simple Frobenius manifolds. Moreover, Givental has shown that this group acts transitively on the space semi-simple Frobenius manifolds. This two constructions of the group action were identified in~\cite{FeiLeuSha} via an identification of the formulas of Y.-P.~Lee for the infinitesimal Givental action~\cite{Lee} with the tangent van de Leur action computed in~\cite{FeiLeuSha} in terms of twisted wave functions of multi-component KP hierarchies. 

In this paper, we extend in some way the formulas for the tangent van de Leur action computed in~\cite{FeiLeuSha}. Namely, we express infinitesimal Lie algebra action on the space of solutions of the Darboux-Egoroff system in terms of the twisted wave functions of multi-component KP. In principle, these formulas are of independent interest. In particular, they allow us to fit Shramchenko's deformations into a general Givental-van de Leur scheme. In particular, it is interesting to trace a corrsepondence between geometric ingridients of Shramchenko's deformation and particular wave functions of the multi-component KP hierarchy that is associated to an arbitrary solution of the Darboux-Egoroff system in van de Leur's approach .

\subsection{Organization of the paper} In section~\ref{sec1}, we recall the constructions of Hurwitz Frobenius manifolds and their deromations.
In section~\ref{sec2}, we recall the van de Leur approach to Frobenius manifols and use it in order to derive explicit formulas for the Givental-van de Leur infinitesimal deformations of solutions of the Darboux-Egoroff equations. In section~\ref{sec3} we discuss the simplest possible example of such infinitesimal deformations that can be integrated explicitely and show that it is exactly the way one could obtain Shramchenko's deformations of Hurwitz Frobenius manifolds.

\subsection{Acknowledgements}
The authors are grateful to H.~Posthuma for a useful discussion.

\section{Frobenius structures associated to Hurwitz spaces} \label{sec1}

\subsection{Darboux-Egoroff equations}

In this paper we consider only semi-simple Frobenius manifolds. There is a way to encode the structure of a semi-simple Frobenius manifold in canonical coordinates as a solution of a system of PDEs that is called Darboux-Egoroff equations~\cite{Dub}. 

Let $n\geq 1$. We consider functions $\gamma_{ij}=\gamma_{ji}$, $i,j=1,\dots,n$, $i\not=j$, in variables $u_1,\dots,u_n$. The Darboux-Egoroff equations read:
\begin{align} \label{eq:DE}
\frac{\d\gamma_{ij}}{\d u_k} & = \gamma_{ik}\gamma_{kj}, & & i\not= j\not= k\not= i \\
\sum_{k=1}^n \frac{\d\gamma_{ij}}{\d u_k} & =0 & & i\not= j \notag
\end{align}

Is it convenient to collect $\gamma_{ij}$ into a symmetric matrix with the diagonal terms that can be either equal to $0$ or just arbitrary. We introduce a special notation for that. Let $M$ be a symmetric matrix. By $\nd M$ we denote the same matrix with non-specified diagonal terms.

\subsection{Hurwitz spaces} 

We fix some integer numbers $a_1,\dots,a_m>0$ and $g\geq 0$. Let $\H$ be the space of the equivalence classes of the tuples of data $(C_g,\{a_i,b_i\}_{i=1}^g,f\colon C_g\to\CP1)$, where $C_g$ is a Riemann surface of genus $g$, $\{a_i,b_i\}_{i=1}^g$ is a choice of the canonical basis of cycles on $C_g$, and $f\colon C_g\to\CP1$ is a meromorphic function of degree $d:=\sum_{i=1}^m a_m$ with exactly $m$ poles of multiplicity $a_1,\dots,a_m$ and $n:=2g+d+m-2$ simple critical points $x_1,\dots,x_{n}\in C_g$. In addition, we choose local parameters $z_,\dots,z_n$ at the points $x_1,\dots,x_{n}\in C_g$ such that $f=z_i^2$ in a neighbourhood of $x_i$. Two tuples of this data are equivalent if there is a biholomorphic map between two source curves that preserves the rest of the data.

The critical values of meromorphic functions $u_i:=f(x_i)$, $i=1,\dots,n$, are local coordinates on the space $\H$. 

We recall the Kokotov-Korotkin construction~\cite{KokKor} of a solution of the Darboux-Egoroff equations. Let $W(P,Q)$ be the canonical meromorphic bidifferential on a Riemann surface $C_g$. That is, $W(P,Q)$ is specified by the following properties: it is symmetric, it has a quadratic pole on the diagonal $P = Q$ with biresidue $1$, and its $a$-periods with respect to both variables vanish. Then the functions
\begin{equation}\label{eq:KokKor}
\gamma_{ij}:=\frac{1}{2}W(x_i,x_j):=\left.\frac{1}{2}\frac{W(P,Q)}{dz_i(P)dz_j(Q)}\right|_{P=x_i,Q=x_j}
\end{equation}
in variables $u_1,\dots,u_n$ satisfy the Darboux-Egoroff equations.

\subsection{Shramchenko's deformations}\label{sec:shr}

Let $\omega_i$, $i=1,\dots,n$, be the basis of holomorphic differentials on $C_g$ normalized by $\int_{a_i}\omega_j=\delta_{ij}$. Denote by $\omega$ the matrix of the values of $\omega_i$ at critical points: 
\begin{equation}\label{eq:omega-subs}
\omega_{ij}:=\omega_i(x_j):=\left.\frac{\omega_i(P)}{dz_j(P)}\right|_{P=x_j}
\end{equation}
Denote by $B$ the matrix of $b$-periods of these differentials divided by $\pi\sqrt{-1}$, $B_{ij}:=\frac{1}{\pi \sqrt{-1} }\int_{b_i}\omega_j$. Let $M$ be an arbitrary $g\times g$ symmetric matrix such that $B+M$ is non-degenerate. Shramchenko's deformations of Hurwitz Frobenius manifolds~\cite{Shr2} are given by the formula
\begin{equation} \label{eq:shr}
\nd\gamma(M):= \nd \left(\gamma - \omega^t (B+M)^{-1} \omega\right).
\end{equation}
Here $\nd\gamma$ is given by equation~\eqref{eq:KokKor}. Shramchenko proved that $\gamma_{ij}(M)$ are solutions of the Darboux-Egoroff equations in the variables $u_1,\dots,u_n$ in the domain $\det(B+M)\not=0$. Observe that $\nd\gamma(M)$ tends to $\nd\gamma$ when $(B+M)^{-1}$ tends to zero.

The proof that $\nd\gamma(M)$ is  a solution of the Darboux-Egoroff equations is based on Rauch variational formula and its corollaries:
\begin{align} \label{eq:Rauch1}
\frac{\d W(P,Q)}{\d u_j} & = \frac{1}{2}W(P,x_j)W(Q,x_j), \\ \label{eq:Rauch2}
\frac{\d \omega_i(P)}{\d u_j} & = \frac{1}{2}\omega_i(x_j)W(P,x_j), \\ \label{eq:Rauch3}
\frac{\d B_{kl}}{\d u_j} & = \omega_k(x_j)\omega_l(x_j),
\end{align}
where evaluation of differentials at particular points is defined in~\eqref{eq:KokKor} and~\eqref{eq:omega-subs}.


\section{Van de Leur's formalism for Frobenius manifolds}\label{sec2}

In this section we explain van de Leur's construction of a Frobenius structure associated to a point in the isotropic semi-infinite Grassmannian.

\subsection{Basic definitions}
Let $V=\langle e_1,\dots,e_n\rangle$ be an $n$-dimensional vector space over $\C$. Let $z$ be a formal variable. We denote by $\V$  
the vector space $\Lambda^{\infty/2}\left(V\otimes \C[z^{-1},z]\right)$ spanned by the semi-infinite wegde products 
$$
\omega=(e_{i_1}z^{d_1})\wedge(e_{i_2}z^{d_2})\wedge(e_{i_3}z^{d_3})\wedge\dots
$$
such that the tail of $\omega$ coinsides with the tail of vacuum vector
$$
|0\rangle : = (e_1z^0)\wedge\cdots\wedge (e_nz^0)\wedge(e_1z^1)\wedge\cdots\wedge (e_nz^1)\wedge\dots.
$$
By tail of $\omega$ we call another basis vector in $\V$ that is obtained from $\omega$ by removing the first few factors in the wedge product.

Consider a matrix series $A(z)\in End(V)\otimes \C[[z^{-1},z]]$ such that $A^t(-z)A(z)=\mathrm{Id}\cdot z^0$ (it is better to imagine it as a finite product of invertible matrix series in $End(V)\otimes \C[[z]]$ and $End(V)\otimes \C[[z^{-1}]]$ satisfying the same symplectic condition). 

Let $\alpha_i$ be a local Lie algebra element whose action on $V\otimes \C[z^{-1},z]$ is defined by 
$$
\alpha_i(e_jz^d):=\left\{
\begin{array}{ll}
 e_j z^{d+1} & \text{if } i=j, \\
 0 & \text{otherwise,}
\end{array}
\right.
$$
and is expanded to $\V$ by the Leibnitz rule.

All basic objects that we are going to consider are some matrix elements of the operator 
$$
\mathcal{A}:=\exp(\sum_{i=1}^n \alpha_i u_i) A(z),
$$ 
where $u_1,\dots,u_n$ are formal variables.

We denote by $\gamma_{ij}=\gamma_{ij}(A)$, $i,j=1,\dots,n$, $i\not= j$, the following matrix elements of $\A$:
$$
\gamma_{ij}:= \pm \frac{\left< |0\rangle \left| \A \left| (e_iz^{-1})\wedge \d_{(e_jz^0)} |0\rangle \right> \right. \right.}
{\left< |0\rangle \left| \A \left| |0\rangle \right> \right. \right.} .
$$
(the vector $(e_iz^{-1})\wedge \d_{(e_jz^0)} |0\rangle$ is obtained, up to a sign, from the vacuum vector $|0\rangle$ by the replacement of $(e_jz^0)$ by $(e_iz^{-1})$).

We denote by $(\Psi_d)_{ij}=(\Psi_d)_{ij}(A)$, $i,j=1,\dots,n$, the following matrix elements of $\A$:
$$
(\Psi_d)_{ij}:= \frac{\left< (e_jz^{-1})\wedge |0\rangle \left| \A \left| (e_iz^{-1-d})\wedge |0\rangle \right> \right. \right.}
{\left< |0\rangle \left| \A \left| |0\rangle \right> \right. \right.} .
$$
These matrices are can be arranged into a generating series $\Psi(z):=\sum_{d=1}^\infty z^d \Psi_d$ that would be a wave function of multi-KP hierarchy multiplied by $A(z)$ from the right. The property $A^t(-z)A(z)=\mathrm{Id}\cdot z^0$ imply that $\Psi^t(-z)\Psi(z)=\mathrm{Id}\cdot z^0$.

Van de Leur has shown in~\cite{Leu} that a formal locally semi-simple Frobenius structure can expressed in terms of the matrices $\gamma$ and $\Psi_d$, $d\geq 0$. In particular, $\gamma_{ij}$ is a solution of the Darboux-Egoroff system; $u_1,\dots,u_n$ are canonical coordinates; $\Psi_0^t\Psi_1 \mathbf{1}$ is a column of flat coordinates (here $\mathbf{1}$ is the column of units); and $(1/2)\cdot \mathbf{1}^t \Psi_0^t (-\Psi_3\Psi_0^t + \Psi_2\Psi_1^t) \Psi_0 \mathbf{1}$ is the prepotential of a Frobenius manifold.

\subsection{Infenitesimal deformations}

From the previous section we see that there is an action of the groups of matrices $A(z)\in Hom(V,V)\otimes \C[[z]]$, $A^t(-z)A(z)=\mathrm{Id}\cdot z^0$, and $A(z^{-1})\in Hom(V,V)\otimes \C[[z^{-1}]]$, $A^t(-z^{-1})A(z^{-1})=\mathrm{Id}\cdot z^0$. This group action is crucially important, see, e.g.,~\cite{FeiLeuSha} for a list of references for particular applications.

We discuss the corresponding Lie algebra action. Let $k\geq 0$ and $\ell>0$. Let matrices $r$ and $s$ be symmetric for odd $\ell$ and skewsymmetric for even $\ell$. It is proven in~\cite{FeiLeuSha} that
\begin{align}\label{eq:r-psi}
& \left.\frac{\d}{\d \epsilon}\right|_{\epsilon=0}\Psi_k(A\exp{\epsilon(r z^{-\ell})})  = \\ 
& \Psi_{\ell+k}(A)r  
\notag  - \sum_{p=1}^\ell\sum_{q=0}^{\ell-p}(-1)^{\ell-p-q}\Psi_q(A) r \Psi_{\ell-p-q}^t(A)\Psi_{p+k}(A); \\
&\label{eq:l-psi} \left.\frac{\d}{\d \epsilon}\right|_{\epsilon=0}\Psi_k(A\exp{\epsilon(s z^{\ell})})  = 
\begin{cases}
\Psi_{k-\ell}(A) s, & \ell\leq k; \\
0, & \ell >k
\end{cases} 
\end{align}
This allows to compute the action of this Lie algebra on the prepotential of Frobenius manifolds in flat coordinates, since both the prepotential and the flat coordinates are expressed in terms of $\Psi_d$, $d\geq 0$.

In order to deal with Hurwitz Frobenius manifolds, we need some formulas for the Lie algebra action on $\nd\gamma(A)$.
\begin{theorem} Let $\ell\geq 0$. Let matrices $r$ and $s$ be symmetric for odd $\ell$ and skewsymmetric for even $\ell$. We have:
\begin{align}
\label{eq:r-gamma}
\left.\frac{\d}{\d \epsilon}\right|_{\epsilon=0}\nd\gamma(A\exp{\epsilon(r z^{-\ell})}) & = \nd\sum_{i+j=\ell-1} (-1)^{j-1} \Psi_i(A)r\Psi_j^t(A) \\ 
\left.\frac{\d}{\d \epsilon}\right|_{\epsilon=0}\nd\gamma(A\exp{\epsilon(s z^{\ell})}) & = 0.  
\label{eq:s-gamma}
\end{align}
\end{theorem}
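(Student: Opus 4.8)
The plan is to compute the derivative directly as a deformation of the fermionic matrix element defining $\gamma_{ij}$, in exactly the spirit of the derivation of \eqref{eq:r-psi} and \eqref{eq:l-psi}. Write $X=rz^{-\ell}$ (resp.\ $X=sz^{\ell}$) for the Lie algebra element and let $\h X$ denote its action on $\V$ extended by the Leibnitz rule. Since $\gamma_{ij}$ is a ratio of matrix elements of $\A$, differentiating $\A\exp(\epsilon X)$ at $\epsilon=0$ and applying the quotient rule gives
$$
\left.\frac{\d}{\d\epsilon}\right|_{\epsilon=0}\gamma_{ij}(A\exp(\epsilon X))=\pm\frac{\langle 0|\,\A\,\h X\,|v_{ij}\rangle}{\langle 0|\A|0\rangle}\mp\gamma_{ij}\,\frac{\langle 0|\,\A\,\h X\,|0\rangle}{\langle 0|\A|0\rangle},
$$
where $|v_{ij}\rangle=(e_iz^{-1})\wedge\d_{(e_jz^0)}|0\rangle$. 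Thus everything reduces to evaluating two matrix elements of $\A$ carrying one extra quadratic insertion $\h X$.

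The key step is to write $\h X$ as a normal-ordered quadratic in the fermionic creation/annihilation operators and to apply Wick's theorem, so that each correlator factors into products of the elementary two-point functions of $\A$, which are precisely the matrices $\Psi_d$ and $\gamma$. The second term above is the \emph{disconnected} contraction (the self-contraction of $\h X$ times the contraction producing $\gamma_{ij}$ from $|v_{ij}\rangle$), and it cancels exactly against the disconnected part of the first term. What survives is the \emph{connected} contraction, in which the two fermions of $\h X$ are paired with the two fermions $(e_iz^{-1})$ and $(e_jz^0)$ of $|v_{ij}\rangle$; carrying these pairings into the wave function produces one factor $\Psi$ on the left and one factor $\Psi^t$ on the right of the matrix $r$, while the monomial $z^{-\ell}$ together with the fermionic grading forces the two $z$-exponents to add up. The cleanest way to package the outcome is through the generating function: if $\Psi(z)=\sum_{d\geq 0}z^d\Psi_d$, then $\Psi(z)\,r\,\Psi^t(-z)=\sum_{i,j}(-1)^j z^{i+j}\Psi_i r\Psi_j^t$, and the connected contraction extracts exactly the coefficient of $z^{\ell-1}$, namely $\sum_{i+j=\ell-1}(-1)^j\Psi_i r\Psi_j^t$; absorbing the overall sign from the matrix-element normalization turns this into the right-hand side of \eqref{eq:r-gamma}. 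Throughout, the orthogonality relation $\Psi^t(-z)\Psi(z)=\mathrm{Id}\cdot z^0$ is the tool that collapses the spurious terms.

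For \eqref{eq:s-gamma} the same computation applies with $X=sz^{\ell}$, $\ell\geq 0$. Now $\h X$ raises the $z$-grading, so the connected contraction would require pairing the modes $(e_iz^{-1})$ and $(e_jz^0)$ with modes of nonnegative $z$-degree already occupied in the vacuum; by the Pauli principle, equivalently because only $\Psi_d$ with $d\geq 0$ are nonzero, every such off-diagonal contraction vanishes. This is the exact analogue of the vanishing $\ell>k$ branch of \eqref{eq:l-psi}.

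The main obstacle will be the bookkeeping of the signs coming from reordering fermions in Wick's theorem, together with the careful verification that the disconnected contractions combine with the normalization term into something purely diagonal, so that applying $\nd$ removes them cleanly. A more computational alternative that avoids Wick's theorem is to use the Darboux-Egoroff linear system expressing $\gamma$ through $\d_{u}\Psi_0$ and $\Psi_0^t$, to differentiate in $\epsilon$ using \eqref{eq:r-psi} with $k=0$ and the known $u$-derivatives of the $\Psi_d$, and to simplify with orthogonality; this route reaches the same formula but is heavier and less transparent.
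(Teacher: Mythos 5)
Your primary route --- deforming the fermionic matrix element for $\gamma_{ij}$ directly and evaluating by Wick's theorem --- is not the paper's proof, and as sketched it has a genuine gap at its central step. When you expand $\langle 0|\,\A\,\h X\,|v_{ij}\rangle$ by Wick's theorem, the pairings that occur are two-point functions of the state $\langle 0|\A$ with \emph{arbitrary} mode indices: for $X=rz^{-\ell}$ the quadratic $\h X$ involves modes $e_az^{d-\ell}$ and $e_bz^{d}$ for all $d$, so the connected contractions are matrix elements of the form $\langle 0|\A|(e_az^{-p})\wedge\d_{(e_bz^{q})}|0\rangle$ with general $p\geq 1$, $q\geq 0$. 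These are the affine coordinates of the Grassmannian point, and they are \emph{not} ``precisely the matrices $\Psi_d$ and $\gamma$'': $\gamma$ only covers the level pair $(-1,0)$, while the $(\Psi_d)_{ij}$ are cross matrix elements with one extra fermion in the bra and one in the ket, i.e.\ on opposite sides of $\A$. Rewriting the one-sided pairings as products of wave-function coefficients --- your claim that the connected part is the coefficient of $z^{\ell-1}$ in $\Psi(z)\,r\,\Psi^t(-z)$ --- is exactly the nontrivial content, and it requires a kernel-factorization/bilinear-identity argument of essentially the same weight as the proof of \eqref{eq:r-psi} itself in~\cite{FeiLeuSha}; as written you assert it rather than prove it. Your treatment of \eqref{eq:s-gamma} is also too quick: the Pauli-principle argument misses the c-number contraction internal to the insertion. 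At $\ell=1$, $\h X$ with $X=sz$ maps the particle $(e_iz^{-1})$ straight into the level-zero hole, giving $\h X|v_{ij}\rangle=s_{ji}|0\rangle$ --- a term involving no two-point function of $\A$ at all --- and at $\ell=0$ there is an analogous frame-rotation term; showing that these boundary terms disappear is a genuine convention-sensitive verification that your sketch does not address.

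Ironically, the route you dismiss in your last paragraph as ``heavier and less transparent'' is, almost verbatim, the paper's actual proof, and it is in fact the lighter one because it reuses \eqref{eq:r-psi} and \eqref{eq:l-psi} as black boxes instead of redoing the fermionic calculus. The paper differentiates the linear system \eqref{eq:gamma-psi} (with $d=0$) in $\epsilon$, substitutes $\left.\frac{\d}{\d\epsilon}\right|_{\epsilon=0}\Psi_0$ from \eqref{eq:r-psi} with $k=0$, simplifies the double sum using $\Psi^t(-z)\Psi(z)=\mathrm{Id}\cdot z^0$, and then uses invertibility of $\Psi_0$ (that is, $\Psi_0\Psi_0^t=\mathrm{Id}$; the ``$\Psi_0\Psi_0^t=0$'' in the printed proof is a typo) to extract $[\delta\gamma,E_{kk}]=[E_{kk},\sum_{i=0}^{\ell-1}(-1)^{i}\Psi_{\ell-1-i}r\Psi_i^t]$ for every $k$, which determines the off-diagonal part of $\delta\gamma$ and yields \eqref{eq:r-gamma}. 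On this route \eqref{eq:s-gamma} comes for free: \eqref{eq:l-psi} gives $\left.\frac{\d}{\d\epsilon}\right|_{\epsilon=0}\Psi_0=0$ (or $\Psi_0 s$ at $\ell=0$), so differentiating \eqref{eq:gamma-psi} forces $[\delta\gamma,E_{kk}]\Psi_0=0$, i.e.\ $\nd\delta\gamma=0$, with no boundary terms to worry about. I would recommend either completing your Wick computation by actually proving the factorization of the one-sided two-point functions (and carefully disposing of the $\ell=1$ and $\ell=0$ c-number terms in \eqref{eq:s-gamma}), or simply promoting your ``alternative'' paragraph to the main argument --- the latter is the paper's proof and is complete in a few lines.
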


\begin{proof} This theorem is an easy consequence of formulas~\eqref{eq:r-psi} and~\eqref{eq:l-psi}. Indeed, there is a relation between $\Psi_d$, $d\geq 0$ and $\nd\gamma$ that is proven in~\cite{Leu}. For any $d\geq 0$, $k=1,\dots,n$, we have:
\begin{equation}\label{eq:gamma-psi}
\frac{\d}{\d u_i} \Psi_d = E_{kk}\Psi_{d-1} + [\nd\gamma,E_{kk}] \Psi_d
\end{equation}
Here and below we assume that $\Psi_{-1}=0$ and we use $\gamma$ with arbitrary diagonal terms since they disappear in the commutator with $E_{kk}$. By $E_{kk}$ we denote the matrix unit, that is $(E_{kk})_{ij}:=\delta_{ik}\delta_{jk}$. This formula gives an expression for all elements of $\nd\gamma$ in terms of $\Psi_0$ and its derivatives.

We combine equations~\eqref{eq:r-psi} and~\eqref{eq:gamma-psi}:
\begin{align*}
\left.\frac{\d}{\d\epsilon}\right|_{\epsilon=0}\frac{\d}{\d u_i} \Psi_0(A\exp{\epsilon(r z^{-\ell})}) & = 
[\left.\frac{\d}{\d\epsilon}\right|_{\epsilon=0}\gamma (A\exp{\epsilon(r z^{-\ell})}),E_{kk}] \Psi_0(A) \\
&+ [\gamma (A),E_{kk}] 
\left.\frac{\d}{\d\epsilon}\right|_{\epsilon=0}\Psi_0 (A\exp{\epsilon(r z^{-\ell})})
\end{align*}
We denote $\left.\frac{\d}{\d\epsilon}\right|_{\epsilon=0}\gamma (A\exp{\epsilon(r z^{-\ell})})$ by $\delta\gamma$. Using that $\Psi^t(-z)\Psi(z)=\mathrm{Id}\cdot z^0$, we obtain the following equation:
\begin{align*}
& \frac{\d}{\d u_i}\left(\Psi_{\ell}r  
+ \sum_{q=0}^{\ell-1}(-1)^{\ell-q}\Psi_q r \Psi_{\ell-q}^t\Psi_{0}\right) = \\
&[\delta\gamma,E_{kk}] \Psi_0
+[\gamma,E_{kk}]\left(\Psi_{\ell}r+\sum_{q=0}^{\ell-1}(-1)^{\ell-q}\Psi_q r \Psi_{\ell-q}^t\Psi_{0}\right)
\end{align*}
Using equation~\eqref{eq:r-psi}, we see that 
\begin{align*}
[\delta\gamma, E_{kk}]\Psi_0 = [E_{kk}, \sum_{i=0}^{\ell-1} (-1)^i \Psi_{\ell-1-i} r \Psi_i] \Psi_0.   
\end{align*}
Since $\Psi_0$ is invertible, $\Psi_0\Psi_0^t=0$, we obtain equation~\eqref{eq:r-gamma}. 
Equation~\eqref{eq:s-gamma} can be proven in the same way, but in fact it is obvious from the definition of $\gamma$.
\end{proof}

\begin{example}\label{example} The simplest non-trivial deformation would be by an element $rz^{-1}$, where $r$ is an arbitrary symmetric matrix.
Denote by $\delta_r\gamma$ and $\delta_r\Psi_d$ the corresponding infinitesimal deformations. We have the following system of equations:
\begin{align} \label{eq:example}
\nd\delta_r\gamma & =-\nd \Psi_0 r \Psi_0^t \\ \notag
\delta_r\Psi_0 & = \Psi_1 r - \Psi_0 r \Psi_0^t \Psi_1 \\ \notag
\delta_r\Psi_1 & = \Psi_2 r - \Psi_0 r \Psi_0^t \Psi_2 \\ \notag
& \mbox{and so on.} 
\end{align}
\end{example}



\section{Special deformations}\label{sec3}

Shramchenko's deformations of Hurwitz Frobenius manifolds discussed in section~\ref{sec:shr} fits into a special case of example~\ref{example} that can be integrated explicitely.

\subsection{The input}
Consider a symmetric $n\times n$ matrix $\nd\gamma$ whose elements are functions in $u_1,\dots,u_n$. Let $\nd\gamma$ be a solution of the Darboux-Egoroff equations~\eqref{eq:DE}. There are two important geometric structures associated to the Frobenius structure corresponding to $\nd\gamma$. 

First, there is a solution of the commutativity equations~\cite{ShaZvo}, which is a symmetric $n\times n$ matrix $C=C(u_1,\dots,u_n)$ such that $dC\wedge dC=0$. In terms of multi-KP tau-functions, $C=\Psi_0^t\Psi_1$. 

Second, one can consider $\Psi_0$ itself. In geometric terms $\Psi_0$ is defined by the equation $dC=\Psi_0^t\cdot{diag}(du_1,\dots,du_n)\cdot\Psi_0$.
An alternative way to define $\Psi_0$ is the following. Consider the system of equations (it is equivalent to equation~\eqref{eq:gamma-psi}):
\begin{align}
\frac{\d \left(\Psi_0\right)_{ij}}{\d u_k}& =\gamma_{ik}\left(\Psi_0\right)_{kj}, & i\ne k,\\
\sum_{k=1}^n\frac{\d \left(\Psi_0\right)_{ij}}{\d u_k}& =0.&
\end{align}
Compatibility of this system of equations follows from the Darboux-Egoroff equations for $\nd\gamma$. This system of equations implies that $\d (\Psi_0^t\Psi_0)/\d u_k=0$, $k=1,\dots,n$, and $\Psi_0$ that we need is a particular solution of this system of equations such that $\Psi_0^t\Psi_0=\mathrm{id}$.

\subsection{Special deformations} \label{sec:special}
We consider a distribution in the tangent bundle of the moduli space of the solutions of the Darboux-Egoroff equations. It is given by the Givental-van de Leur tangent vectors of the type~\eqref{eq:example} described in example~\ref{example}. It is easy to see that, roughly speaking, a deformation of a particular solution of the Darboux-Egoroff equations is given by an ordinary differential equations of the infinite order. 

However, there is a special class of infinitesimal deformations that can be reduced to a finite order ODEs. We fix a positive integer $g\leq n/2$. Let $D$ be a $g\times n$ constant matrix of rank $g$ such that $DD^t=0$. Let us consider the distribution in the tangent bundle of the moduli space of the solutions of the Darboux-Egoroff equations given by the Givental-van de Leur tangent vectors of the type~\eqref{eq:example} described in example~\ref{example} with the matrix $r$ that can be represented as $r=D^t M D$, where $M$ is an arbitrary symmetric $g\times g$ matrix. In that case equation~\eqref{eq:example} can be reduced to an ODE of finite order.

\begin{proposition} Equation~\eqref{eq:example} for the matrix $r=D^t M D$ implies the following system of ODEs for $\nd\gamma$, $\omega:=D\Psi_0^t$, and $B:=DCD^t$:
\begin{align}\label{eq:ddt}
\nd \delta_M\gamma & = -\nd \omega^t M \omega; \\ \notag
\delta_M \omega & = -BM\omega; \\ \notag
\delta_M B & = -B M B.
\end{align}
\end{proposition}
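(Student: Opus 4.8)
The plan is to establish the three identities one at a time, in each case starting from the matching line of Example~\ref{example}, substituting $r=D^tMD$, and then conjugating the resulting expression by $D$ on the left and $D^t$ on the right, so that the hypothesis $DD^t=0$ annihilates every term except the one I want to keep. The only structural facts I expect to need beyond Example~\ref{example} are that $r$ is symmetric (because $M$ is), that $C=\Psi_0^t\Psi_1$ is symmetric, and the orthogonality relation $\Psi_0^t\Psi_0=\mathrm{Id}$ recorded in the input section.

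The first identity should be purely formal: substituting $r=D^tMD$ into $\nd\delta_r\gamma=-\nd\Psi_0 r\Psi_0^t$ and recognizing $\omega^t=\Psi_0 D^t$ turns the right-hand side into $-\nd\Psi_0 D^tMD\Psi_0^t=-\nd\omega^tM\omega$, with nothing further to check. For the second identity I would transpose $\delta_r\Psi_0=\Psi_1 r-\Psi_0 r\Psi_0^t\Psi_1$, using $r^t=r$, and then left-multiply by $D$ to form $\delta_M\omega=D(\delta_r\Psi_0)^t$. I expect one of the two terms to carry a factor $Dr=(DD^t)MD=0$ and drop out, while in the other the block $D\Psi_1^t\Psi_0 D^t$ collapses to $DC^tD^t=DCD^t=B$ by symmetry of $C$, and $D\Psi_0^t=\omega$, giving $\delta_M\omega=-BM\omega$.

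The third identity is where I expect the real bookkeeping, and it is the step I would treat as the main obstacle. I would differentiate $B=DCD^t$ by applying the Leibniz rule to $C=\Psi_0^t\Psi_1$, so that $\delta_MB=D\big((\delta_r\Psi_0)^t\Psi_1+\Psi_0^t\,\delta_r\Psi_1\big)D^t$, and then feed in both deformation formulas of Example~\ref{example} with $r=D^tMD$. This produces four terms, and the delicate point is correctly tracking which survive the conjugation and with what sign: I anticipate that two terms acquire a factor $Dr$ or $rD^t$ and vanish by $DD^t=0$, that the term built from $\Psi_0^t\Psi_0\, r\,\Psi_0^t\Psi_2$ vanishes once $\Psi_0^t\Psi_0=\mathrm{Id}$ is invoked (again via $DD^t=0$), and that the lone survivor, $-D\Psi_1^t\Psi_0 D^tMD\Psi_0^t\Psi_1 D^t$, reduces to $-BMB$ upon identifying $D\Psi_1^t\Psi_0 D^t=DCD^t=B$ and $D\Psi_0^t\Psi_1 D^t=B$. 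Apart from this sign-and-survivor accounting, no new ingredient is required; the symmetry of $C$ and the orthogonality of $\Psi_0$ are precisely what make the three deformation equations close among $\nd\gamma$, $\omega$, and $B$.
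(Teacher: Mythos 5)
Your proposal is correct, and it coincides with the paper's argument: the paper's proof of this proposition is literally the one-line ``Direct computation,'' and your computation is exactly that computation written out --- substituting $r=D^tMD$ into Example~\ref{example}, using $DD^t=0$ to kill every term carrying a factor $Dr$ or $rD^t$, and reducing the survivors via $\Psi_0^t\Psi_0=\mathrm{Id}$ and the symmetry $C^t=C$ (so that $D\Psi_1^t\Psi_0D^t=DCD^t=B$). Your sign-and-survivor accounting for $\delta_M B$ (two terms killed by $Dr$ or $rD^t$, one killed after invoking $\Psi_0^t\Psi_0=\mathrm{Id}$, and the lone survivor $-BMB$) checks out exactly.
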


\begin{proof} Direct computation. \end{proof}

In order to use this proposition for a particular $\nd\gamma$ without going back to the full multi-KP framework, we need an independent definitions of $\omega$ and $B$ in terms of $\gamma$ and $D$. We define $\omega$ as a $g\times n$-matrix-valued solutions of the equation
\begin{equation}\label{eq:defomega}
d\omega = \omega\cdot [diag(du_1,\dots,du_n),\nd\gamma]
\end{equation}
with the constant term $\omega|_{u=0}=D\Psi_0^t|_{u=0}$.
We define $B$ to be a $g\times g$-matrix-valued solution of the equation
\begin{equation}\label{eq:defB}
dB = \omega\cdot diag(du_1,\dots,du_n)\cdot \omega^t
\end{equation}
with the constant term $B|_{u=0}=DCD^t|_{u=0}$.

Equations~\eqref{eq:ddt} can be integrated explicitely in the case when $M$ is a constant matrix independent of $\nd\gamma$, $\omega$, and $B$. Indeed,
let us define $\nd\gamma(\epsilon)$, $\omega(\epsilon)$, and $B(\epsilon)$ by the following formulas: 
\begin{align}\label{eq:defform}
\nd\gamma(\epsilon) & := \nd\gamma-\nd\omega^t \epsilon M(1+\epsilon BM)^{-1} \omega; \\ \notag
\omega(\epsilon) & = (1+\epsilon BM)^{-1}\omega; \\ \notag
B(\epsilon) & = (1+\epsilon BM)^{-1}B.
\end{align}
(these formulas are defined in the domain where $(1+\epsilon BM)$ is invertible).

\begin{proposition} The matrices $\nd\gamma(\epsilon)$, $\omega(\epsilon)$, and $B(\epsilon)$ satisfy equations~\eqref{eq:defomega} and~\eqref{eq:defB} for any $\epsilon\geq 0$. They integrate the constant vector field determined by the matrix $M$, that is,
\begin{align*}
\nd\frac{\d \gamma(\epsilon)}{\d \epsilon} & =-\nd \omega(\epsilon)^t M \omega(\epsilon); & \nd\gamma(\epsilon)|_{\epsilon=0}& =\nd\gamma; \\
\frac{\d \omega(\epsilon)}{\d \epsilon} & =-B(\epsilon)M\omega(\epsilon); & \omega(\epsilon)|_{\epsilon=0}&=\omega; \\
\frac{\d B(\epsilon)}{\d \epsilon} & =-B(\epsilon) M B(\epsilon); & B(\epsilon)|_{\epsilon=0}&=B; \\
\end{align*}
\end{proposition}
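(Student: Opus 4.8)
The plan is to verify the proposition by direct computation, since every object in~\eqref{eq:defform} is built from the single resolvent $R:=(1+\epsilon BM)^{-1}$. First I would record the algebraic facts that make the computation close up. Because $M$ is symmetric and $B=DCD^t$ is symmetric (as $C$ solves the commutativity equations), one has $R^t=(1+\epsilon MB)^{-1}$, and the elementary identity $M(1+\epsilon BM)=(1+\epsilon MB)M$ gives the intertwining relation $MR=R^tM$. From $R(1+\epsilon BM)=1$ one reads off $\epsilon RBM=1-R$, and since $RB(1+\epsilon MB)=R(1+\epsilon BM)B=B$ one also gets $1-\epsilon MRB=(1+\epsilon MB)^{-1}=R^t$. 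Finally, since $\omega=D\Psi_0^t$ and $\Psi_0^t\Psi_0=\mathrm{id}$, we have $\omega\omega^t=DD^t=0$; along the $u$-flow this persists because \eqref{eq:defomega} forces $d(\omega\omega^t)=0$ (the bracket $[\mathrm{diag}(du),\nd\gamma]$ is antisymmetric as $\gamma$ is symmetric, so the two contributions cancel), whence $\omega\omega^t$ equals its value $DD^t=0$ at $u=0$.

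For the $\epsilon$-ODEs I would differentiate~\eqref{eq:defform} using $\partial_\epsilon R=-R\,BM\,R$. For $B(\epsilon)=RB$ and $\omega(\epsilon)=R\omega$ the result is immediate: associativity turns $-R\,BM\,RB$ into $-(RB)M(RB)=-B(\epsilon)MB(\epsilon)$ and $-R\,BM\,R\omega$ into $-B(\epsilon)M\omega(\epsilon)$, with no use of symmetry. The content sits in $\partial_\epsilon(\epsilon MR)$: writing it as $MR-\epsilon M R\,BM\,R$ and using $\epsilon RBM=1-R$ collapses it to $MR^2$, and then the intertwining $MR=R^tM$ rewrites $MR^2=R^tMR$. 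Since $\partial_\epsilon\gamma(\epsilon)=-\omega^t(MR^2)\omega=-\omega^tR^tMR\omega=-\omega(\epsilon)^tM\omega(\epsilon)$, this reproduces the third vector-field equation. The initial conditions are trivial, as $R|_{\epsilon=0}=\mathrm{id}$.

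For the $u$-equations~\eqref{eq:defomega} and~\eqref{eq:defB} I would differentiate in $u$, using $dR=-\epsilon R\,(dB)\,MR$ together with the undeformed relations $dB=\omega\,\mathrm{diag}(du)\,\omega^t$ and $d\omega=\omega[\mathrm{diag}(du),\nd\gamma]$. For $dB(\epsilon)$ the computation yields $R\omega\,\mathrm{diag}(du)\,\omega^t(1-\epsilon MRB)$, and the identity $1-\epsilon MRB=R^t$ converts the trailing factor into $R^t$, giving exactly $\omega(\epsilon)\,\mathrm{diag}(du)\,\omega(\epsilon)^t$. For $d\omega(\epsilon)$ one obtains $\omega(\epsilon)[\mathrm{diag}(du),\nd\gamma(\epsilon)]$ plus a single spurious term proportional to $R\omega\,(\omega^t MR\omega)\,\mathrm{diag}(du)$ arising from the second half of the bracket $[\mathrm{diag}(du),\omega^tMR\omega]$; this is where $\omega\omega^t=0$ is essential, killing the term outright. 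I would also note that the $\nd$ in $\nd\gamma(\epsilon)$ is harmless, since only the off-diagonal part of $\gamma(\epsilon)$ survives inside $[\mathrm{diag}(du),\cdot]$.

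The step I expect to be the main obstacle is organizing the resolvent identities so that the two structural hypotheses enter cleanly and in the right places. The symmetry of $M$ and $B$ is precisely what identifies $R^t$ with $(1+\epsilon MB)^{-1}$ and thereby supplies both $MR=R^tM$ and $1-\epsilon MRB=R^t$; these are exactly the identities that close the $B(\epsilon)$ equation and the $\partial_\epsilon\gamma(\epsilon)$ equation. The isotropy $DD^t=0$, equivalently $\omega\omega^t=0$, is what closes the $\omega(\epsilon)$ equation by annihilating its one spurious term. The only delicate bookkeeping is keeping track of which resolvent, $R$ or $R^t$, appears on each side of each product, and confirming that $\omega\omega^t$ remains zero along the $u$-flow; once this is arranged, everything else is routine matrix algebra valid on the domain where $1+\epsilon BM$ is invertible.
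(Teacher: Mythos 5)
Your proposal is correct and matches the paper's approach: the paper's proof is simply ``Direct computation,'' and your writeup carries out exactly that computation, supplying the omitted details. All the key identities check out --- $R^tM=MR$, $\epsilon RBM=1-R$, $1-\epsilon MRB=R^t$, $\partial_\epsilon(\epsilon MR)=MR^2=R^tMR$, $dR=-\epsilon R\,(dB)\,MR$, and the use of $\omega\omega^t=DD^t=0$ (preserved along the $u$-flow by symmetry of $\nd\gamma$) to kill the spurious term in $d\omega(\epsilon)$ --- so your argument is a complete and correct verification.
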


\begin{proof} Direct computation. \end{proof}

\subsection{Shramchenko's formulas}

In this context, Shramchenko's formulas are a version of formulas~\eqref{eq:ddt} for $\epsilon=1$, with some appropriate changes. 
Let use $\nd\gamma$, $\omega$, and $B$ defined in section~\ref{sec:shr}. Equations~\eqref{eq:Rauch1}-\eqref{eq:Rauch3} imply Darboux-Egoroff equations for $\nd\gamma$ and equations~\eqref{eq:defomega}-\eqref{eq:defB}. Therefore, we are indeed have a system suitable for deformation given by~\eqref{eq:ddt} (the initial conditions for $\omega$ and $B$ depend on the choice of a particular point of a formal expansion). 
Indeed, let us substitute $\epsilon=1$ in equation~\eqref{eq:defform}. We have: 
\[
\nd\gamma(\epsilon)|_{\epsilon=1}=\nd\gamma(\epsilon)|_{\epsilon=0} - \nd\omega^t (M^{-1}+B)^{-1} \omega.
\] 
If we change the notations in order to replace $M^{-1}$ with $M$, we obtain exactly formula~\eqref{eq:shr}.

\begin{remark} One could obtain the same solution of the Darboux-Egoroff equations from a special deformation of the following triple: $\nd\tilde{\gamma}:=\gamma-\omega^tB\omega$, $\tilde{\omega}:=B^{-1}\omega$, and $\tilde{B}:=B^{-1}$. In that case some formulas would look a bit simpler.
\end{remark}

\begin{remark} Deformations of ``real doubles''~\cite{Shr1} of Hurwitz Frobenius manifolds fit into exactly the same scheme as we discuss in section~\ref{sec:special}. 
\end{remark}



\begin{thebibliography}{99}

\bibitem{Dub}
B.~Dubrovin, Geometry of $2$D topological field theories, in: Integrable systems and quantum groups (Montecatini Terme, 1993), 120--348, Lecture Notes in Math.~\textbf{1620}, Springer, Berlin, 1996. 

\bibitem{Giv1} A.~Givental, 
Semisimple Frobenius structures at higher genus,
Int.~Math.~Res. Not.~\textbf{2001}, no.~23, 1265--1286.

\bibitem{Giv2} A.~Givental,
Gromov-Witten invariants and quantization of quadratic Hamiltonians,
Mosc.~Math.~J.~\textbf{1} (2001), no.~4, 551--568.

\bibitem{FeiLeuSha}
E.~Feigin, J.~van de Leur, S.~Shadrin, Givental symmetries of Frobenius manifolds and multi-component KP tau-functions, Adv.~Math., to appear, arXiv: 0905.0795, 25 pp.

\bibitem{KokKor}
A.~Kokotov, D.~Korotkin, A new hierarchy of integrable systems associated to Hurwitz spaces, Philos.~Trans.~R.~Soc.~Lond.~Ser.~A Math.~Phys.~Eng.~Sci.~\textbf{366} (2008), no.~1867, 1055--1088.

\bibitem{Lee} Y.-P.~Lee, 
Invariance of tautological equations I: conjectures and applications,
J. Eur. Math. Soc. (JEMS) \textbf{10} (2008), no.~2, 399--413.

\bibitem{Leu}
J.~van de Leur, Twisted ${\rm GL}_n$ loop group orbit and solutions of the WDVV equations, Int.~Math.~Res.~Not.~\textbf{2001}, no.~11, 551--573.

\bibitem{ShaZvo} S.~Shadrin, D.~Zvonkine, A group action on Losev-Manin cohomological field theories, arXiv: 0909.0800, 21 pp.

\bibitem{Shr1} 
V.~Shramchenko, 
\lq\lq Real doubles\rq\rq\ of Hurwitz Frobenius manifolds,
Comm. Math.~Phys.~\textbf{256} (2005), no.~3, 635--680.

\bibitem{Shr2}
V.~Shramchenko, 
Deformations of Frobenius structures on Hurwitz spaces, 
Int. Math.~Res.~Not.~\textbf{2005}, no.~6, 339--387.

\end{thebibliography}
\end{document}